\documentclass[12pt, titlepage]{amsart}
\usepackage{lineno}
\usepackage{amsmath}
\usepackage{amsthm}
\usepackage{xr}
\usepackage{pgfplots}
\graphicspath{{./art/}}
\usetikzlibrary{plotmarks}
\externaldocument{main}
\newtheorem{proposition}{Proposition}
\newtheorem{theorem}{Theorem}
\newtheorem{definition}{Definition}
\newtheorem{remark}{Remark}
\title{Supplementary material for \textit{Directional replicability: when can the factor of 2 be omitted}}
\date{}

\usepackage{amsaddr}

\usepackage[normalem]{ulem}
\usepackage[utf8]{inputenc}
\usepackage[T1]{fontenc}
\usepackage{newtxtext}
\usepackage[subscriptcorrection]{newtxmath}

\usepackage{natbib}
\usepackage[plain,noend]{algorithm2e}

\makeatletter
\renewcommand{\algocf@captiontext}[2]{#1\algocf@typo. \AlCapFnt{}#2} 
\def\@algocf@capt@plain{top}
\renewcommand{\algocf@makecaption}[2]{%
  \addtolength{\hsize}{\algomargin}%
  \sbox\@tempboxa{\algocf@captiontext{#1}{#2}}%
  \ifdim\wd\@tempboxa >\hsize
    \hskip .5\algomargin%
    \parbox[t]{\hsize}{\algocf@captiontext{#1}{#2}}
  \else%
    \global\@minipagefalse%
    \hbox to\hsize{\box\@tempboxa}
  \fi%
  \addtolength{\hsize}{-\algomargin}%
}
\makeatother



\begin{document}

\title{Directional replicability: when can the factor of two be omitted}

\author{Vera Djordjilovi\'c}
\address[1]{Department of Economics, Ca' Foscari University of Venice, Italy}
\email{vera.djordjilovic@unive.it}
\author{Tamar Sofer}
\address[2]{Cardiovascular Institute, Beth Israel Deaconess Medical Center, Harvard Medical School, Boston, MA, USA\\
Department of Biostatistics, Harvard T.H. Chan School of Public Health, Boston, MA, USA\\
Division of Sleep Medicine, Brigham and Women’s Hospital, Boston, MA, USA}
\author{Jonathan M. Dreyfuss}
\address[3]{Bioinformatics \& Biostatistics Core, Joslin Diabetes Center,\\ Harvard Medical School, Boston, MA, USA }
\maketitle
\begin{abstract}
Directional replicability addresses the question of whether an effect studied across
$n$ independent studies is present with the same direction in at least
$r$ of them, for $r \geq 2$. When the expected direction of the effect is not specified in advance, the state of the art recommends  assessing replicability separately by combining one-sided $p$-values for both directions (left and right), and then doubling the smaller of the two resulting combined $p$-values to account for multiple testing. In this work, we show that this multiplicative correction is not always necessary, and give 
conditions under which it can be safely omitted. 

\smallskip
\noindent \textbf{Keywords.} Composite null hypotheses; Concordant replicability; Directional replicability;   Order statistics; Partial conjunction.
\end{abstract}

\section{Introduction}

Low replicability of scientific findings observed in medicine \citep{ioannidis2005most}, economics   \citep{camerer2016evaluating} and psychology \citep{open2015estimating}, has motivated  great interest in developing formal statistical methods for evaluating replicability. We refer the interested reader to a recent review of statistical methodology for replicability analysis by \cite{bogomolov2023replicability}.

Consider a certain phenomenon that is studied in $n$ independent studies. We assume that the object of interest can be represented with a scalar that we refer to as the \textit{effect size} and we  let $\theta = (\theta_1,\ldots,\theta_n) \in \mathbb{R}^n$ denote a vector of true effect sizes across studies. Let further $n^+ = |\left\{i: \theta_i >0\right\}|$ represent the number of positive elements of $\theta$, and similarly $n^-= |\left\{i: \theta_i <0\right\}|$, the number of negative elements of $\theta$. In this work, we are interested in  testing the $r$ out of $n$  directional replicability null hypothesis, denoted by $H_{r/n}$, defined as
\begin{equation*}
 H_{r/n}: n^+ < r \land  n^-<r,
\end{equation*}
for a given $r\leq n$. We consider a general alternative $K_{r/n}: n^+ \geq r \lor n^-\geq r$. Rejecting $H_{r/n}$ allows one to conclude that there are at least $r$ effects of the same sign and thus claim $r$ out of $n$ directional replicability.  Directional replicability imposes a stronger requirement than merely observing an effect in at least $r$ studies (i.e. $n^++n^-\geq r$) since the latter does not require consistency in the sign of the effect. However, rejecting hypothesis $H_{r/n}$ does not imply complete consistency: consider the case $n=4, r=3$ with $n^+=3$, and $ n^-=1$. Then, the effect is  positive  in three studies and $H_{r/n}$ is false; nevertheless, the  effect is negative in the forth study so  the sign of the effect is not consistent across studies. 

The standard approach to testing $H_{r/n}$ is outlined in \cite{owen2009karl}. The hypothesis $H_{r/n}$ is seen as the intersection of the two unilateral  replicability  hypotheses $H_{r/n}^+: n^+ < r$ and $H_{r/n}^{-}: n^- < r$. Each of the unilateral hypotheses is  tested at the significance level $\alpha/2$, for a given $\alpha \in (0,1)$, and the intersection hypothesis $H_{r/n}$ is rejected if either of the two  is rejected. 

The two unilateral replicability null hypotheses $H_{r/n}^+$ and $H_{r/n}^{-}$ can be tested with any test statistic suitable for testing partial conjunction hypothesis. \cite{benjamini2008screening} provide a general procedure for constructing valid test statistics starting from $p$-values of individual studies. See also \cite{bogomolov2023replicability} and \cite{wang2022detecting} for a discussion of $p$-values for partial conjunction hypotheses. In what follows, we provide a brief overview essential for presenting our main result. 

Let us consider a collection of $n$ component null hypotheses $\left\{H_1^+, \ldots, H_n^+\right\}$, where $H_i^+: \theta_i \leq 0$ and the associated alternative is $K_i^+: \theta_i>0$. In this work, we assume to have  independent normal estimators of components of $\theta$, i.e. we let $T_i \sim N(\theta_i,1)$ denote an estimator of $\theta_i$, where, for simplicity, it is assumed that its variance is known and equal to 1.  Then $p_i = 1-\Phi(T_i)$ is a valid $p$-value for $H_i^+$, with $\Phi$ denoting a cumulative distribution function of the standard normal distribution. Independence of $T_i, i=1,\ldots,n$ follows from the assumption of the independence of studies. 

\cite{benjamini2008screening} show that, since $H_{r/n}^+$ is true  if there are at most $r-1$ arbitrarily large positive effects, a valid $p$-value is obtained by ignoring the $r-1$ smallest $p$-values and combining the remaining $n-r+1$  right sided $p$-values with a combining function that leads to a $p$-value stochastically larger or equal to the uniform distribution under $H^+_{r/n}$. Various combining functions could be employed. A simple method for combining $p$-values under arbitrary dependence is the Bonferroni method. Let $p_{(1)},\ldots, p_{(n)}$ be a sequence of $p$-values in a non-decreasing order. The Bonferroni method leads to the following combined $p$-value. 

\begin{definition} Bonferroni partial conjunction $p$-value is $p_{r/n}^+ = (n-r+1)p_{(r)}$.
\end{definition}
 The correction factor of $(n-r+1)$ in Definition 1 corresponds to the usual Bonferroni correction applied to a collection of $n-r+1$ $p$-values obtained by ignoring the $r-1$ smallest $p$-values.  It may be instructive to consider a special case of  a global null hypothesis $H_{1/n}^+$ in which the above correction reduces to the well known factor $n$, and $H_{1/n}^+$ is rejected if $np_{(1)} \leq \alpha$, or equivalently, if at least one $p$-value is below the threshold $\alpha/n$.

Consider now the collection $\left\{H_1^-,\ldots, H_n^-\right\}$, where $H_i^-: \theta_i \geq 0$ with the associated alternative $K_i^-: \theta_i<0$. Then, given the continuity of the distribution of $T_i$,  the $p$-value for $H_i^-$ is $q_i= \Phi(T_i) = 1-p_i$.  As a consequence, the Bonferroni $p$-value for $H_{r/n}^-$ is $p_{r/n}^-= (n-r+1)q_{(r)}=(n-r+1)p_{(n-r+1)}.$ We  state the main result in the following section.

\section{Directional replicability when $r$ is large}

In general, the  procedure for testing $H_{r/n}$ is based on a double of the smaller $p$-value pertaining to unilateral replicability hypotheses, i.e. in this case $p_{r/n}= 2\min\left\{p_{r/n}^-, p_{r/n}^+\right\}$, see \cite{owen2009karl}, \cite{wang2022detecting} and \cite{jaljuli2023quantifying}. The following
Theorem  indicates that when $r$ is large enough with respect to $n$, and the combining function is Bonferroni, the correction factor of two is unnecessary.

\begin{theorem}
\label{theorem_one}
Consider $r$ such that $(n+1)/2 < r\leq n$. Let $p_{r/n}^-$ and $p_{r/n}^+$ be Bonferroni $p$-values for testing $H_{r/n}^-$ and $H_{r/n}^+$, respectively.
Then  $p_{r/n} = \min\left\{p_{r/n}^-,          p_{r/n}^+\right\}$ is a valid $p$-value for $H_{r/n}$. 
\end{theorem}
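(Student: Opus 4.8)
The plan is to stop working with the combined $p$-values and instead describe the two rejection events directly in terms of the estimators $T_i$. Fix a level $t$ and set $s = t/(n-r+1)$ and $c = \Phi^{-1}(1-s)$. Unfolding Definition~1, $p^+_{r/n}\le t$ is equivalent to $p_{(r)}\le s$, i.e. to the event $A^+$ that at least $r$ of the $T_i$ satisfy $T_i\ge c$; symmetrically, $p^-_{r/n}\le t$ is equivalent to $q_{(r)}\le s$, i.e. to the event $A^-$ that at least $r$ of the $T_i$ satisfy $T_i\le -c$. Validity of $p_{r/n}=\min\{p^-_{r/n},p^+_{r/n}\}$ therefore amounts to proving $P_\theta(A^+\cup A^-)\le t$ for every $\theta\in H_{r/n}$.

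The first and conceptually central step is disjointness. Assume $s\le 1/2$, so that $c\ge 0$; this holds for every $t\in(0,1)$ as soon as $r\le n-1$, since then $n-r+1\ge 2$. The event $A^+$ forces at least $r$ indices with $T_i\ge c\ge 0$, while $A^-$ forces at least $r$ indices with $T_i\le -c\le 0$. When $c>0$ these index sets are disjoint, so $A^+\cap A^-$ would require at least $2r$ distinct indices; but $r>(n+1)/2$ gives $2r>n+1>n$, a contradiction. Hence $A^+\cap A^-=\emptyset$ and
\[
P_\theta(A^+\cup A^-)=P_\theta(A^+)+P_\theta(A^-).
\]
This replaces the crude union bound — the very step that produces the factor of two in general — by an \emph{exact} equality, and it is here that the hypothesis $r>(n+1)/2$ first does its work. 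What remains is to show that the sum itself does not exceed $t$.

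To bound $P_\theta(A^+)+P_\theta(A^-)$ I would reduce to extremal configurations. Each term is a tail of a sum of independent Bernoulli indicators ($\mathbf{1}(T_i\ge c)$ for $A^+$, $\mathbf{1}(T_i\le -c)$ for $A^-$), and is monotone in each coordinate: increasing $\theta_i$ raises $P_\theta(A^+)$ and lowers $P_\theta(A^-)$. Combining this monotonicity with the constraints $n^+\le r-1$ and $n^-\le r-1$, I would argue that the supremum of the sum over $H_{r/n}$ is attained when every effect is sent to $-\infty$, $0$, or $+\infty$. In such a configuration the $n^+$ coordinates at $+\infty$ enter $A^+$ deterministically (and never $A^-$), the $n^-$ coordinates at $-\infty$ do the reverse, and only the $n^0=n-n^+-n^-$ null coordinates are random, each landing in either tail with probability exactly $s$. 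Writing $W\sim\mathrm{Bin}(n^0,s)$, the sum collapses to $P(W\ge r-n^+)+P(W\ge r-n^-)$.

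It then remains to show this is at most $(n-r+1)s=t$ whenever $n^+,n^-\le r-1$. Two ingredients enter. First, $g(a)=P(\mathrm{Bin}(n-a,s)\ge r-a)$ is nondecreasing in $a$ — a one-trial coupling gives $\{\mathrm{Bin}(N,s)\ge K\}\subseteq\{\mathrm{Bin}(N-1,s)\ge K-1\}$ — with $g(r-1)=1-(1-s)^{n-r+1}\le(n-r+1)s$ by Bernoulli's inequality, which recovers the classical one-sided Bonferroni validity. Second, and again using $r>(n+1)/2$: whenever $\max(n^+,n^-)>n-r$ the smaller of the two Binomial tails vanishes, since then $W\le n^0<r-\min(n^+,n^-)$, leaving a single term already controlled by $(n-r+1)s$; in the complementary regime both thresholds $r-n^\pm$ exceed $2r-n\ge 2$, and I would bound the two tails jointly (equivalently, rewriting them as lower tails of $\mathrm{Bin}(n^0,1-s)$ whose orders sum to at most $n^0-2$) so as to absorb the second term without reintroducing a factor of two. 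I expect the main obstacle to lie precisely in this last stage: first justifying rigorously that the \emph{non-monotone} objective $P_\theta(A^+)+P_\theta(A^-)$ is maximized at a $\{-\infty,0,+\infty\}$ configuration, and then proving the resulting binomial inequality uniformly over $s\le 1/2$ and over the admissible counts $(n^+,n^-)$. The disjointness argument is short and carries the idea; the quantitative sum bound is the delicate part.
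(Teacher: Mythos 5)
Your opening step --- rewriting the two rejection events as ``at least $r$ of the $T_i$ exceed $c$'' and ``at least $r$ of the $T_i$ fall below $-c$'' and observing that $2r>n+1$ makes them disjoint, so the union bound becomes an exact sum --- is correct and is exactly the paper's first move (the paper phrases it as $T_{(r)}\geq T_{(n-r+1)}$). Your terminal computation in the dominant case, $1-(1-s)^{n-r+1}\leq (n-r+1)s$, also matches the paper's final bound, and your observation that one binomial tail vanishes whenever $\max(n^+,n^-)>n-r$ correctly covers the extremal configuration the paper lands on.

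The genuine gap is the step you yourself flag: the claim that $\sup_{\theta\in H_{r/n}}\{P_\theta(A^+)+P_\theta(A^-)\}$ is attained at a $\{-\infty,0,+\infty\}$ configuration. This is not a routine monotonicity reduction, because the objective is a sum of a term that increases and a term that decreases in each $\theta_i$, so interior stationary points cannot be excluded by coordinatewise monotonicity alone. This reduction is where essentially all of the paper's work goes: it differentiates $c(\theta)$ in each of the first $r-1$ coordinates and shows the derivative is positive by comparing $pr(\sum_{i\geq 2}Y_i=r-1)$ with $pr(\sum_{i\geq 2}X_i=r-1)$, which requires the unimodality of Poisson--binomial probability mass functions and Samuels' bound on the location of the mode to conclude that the two mass functions have already crossed by the time they are evaluated at $r-1$; a separate contradiction argument then handles null points with negative coordinates. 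That this step is delicate is confirmed by the paper's own Proposition~1 for $n=3,r=2$, where ruling out interior stationary points requires a dedicated fixed-point analysis. A secondary, smaller gap is the two-tail binomial inequality $P(W\geq k_1)+P(W\geq k_2)\leq (n-r+1)s$ in your ``complementary regime'': the statement is plausible (both thresholds are at least $2r-n\geq 2$, so both tails are $O(s^2)$), but you only gesture at how to prove it uniformly in $s$ and in $(n^+,n^-)$. As it stands, the proposal correctly isolates the easy half of the theorem (disjointness) and leaves the hard half (the extremal reduction) unproved.
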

\begin{proof}
 Consider $\alpha \in (0,1/2)$.    We will show that a test that rejects $H_{r/n}$ if $\min\left\{p_{r/n}^+,  p_{r/n}^-\right\}\leq \alpha$ is an $\alpha$ level test for $H_{r/n}$. 

Let $T_{(1)}, \ldots, T_{(n)}$ be a sequence of estimators of components of $\theta$ in a non-decreasing order. Then the event $[p_{r/n}^+ \leq \alpha]$ occurs if and only if the event $[T_{(n-r+1)}\geq t]$ occurs, where $t=\Phi^{-1}\{1-\alpha/(n-r+1)\}$.  Similarly, the event $[p_{(r/n)}^-\leq \alpha]$ occurs if and only if the event $[T_{(r)}\leq -t]$ occurs. 

Let $\Theta_0\subset \mathbb{R}^n$ be the null parameter space containing all values of $\theta$ such that the hypothesis $H_{r/n}$ is true. Let 
\begin{equation}\label{type1}
c(\theta) = pr_\theta(  T_{(r)}\leq -t \cup T_{(n-r+1)}\geq t)
\end{equation}
indicate the probability of rejecting $H_{r/n}$. We need to show that $\sup_{\theta \in \Theta_0} c(\theta)\leq \alpha$. 

Since $2r > n+1$, then $r> n-r+1$, so that $T_{(r)}\geq T_{(n-r+1)}$ and the two events in \eqref{type1} leading to Type I error  are disjoint. We thus have $c(\theta) =  pr_\theta(T_{(r)}\leq -t) + pr_\theta(T_{(n-r+1)}\geq t)$.
The probability of Type I error can also be expressed as 
\begin{equation*}\label{typeI}
c(\theta) = pr_\theta(X\geq r) + pr_\theta(Y\geq r),
\end{equation*}
where $X = \sum_{i=1}^nX_i$ and $X_i = [T_i\leq -t]$, $i=1,\ldots,n$ are independent indicator variables, and analogously $Y=\sum_{i=1}^nY_i$, $Y_i = [T_i\geq t]$, $i=1,\ldots,n$. 

Since $pr_\theta(X\geq r)=pr_\theta(X_1=1, \sum_{i=2}^nX_i\geq r-1) + pr_\theta(X_1=0, \sum_{i=2}^nX_i\geq r)$ and $pr_\theta(X_1=1) = \Phi(-t-\theta_1)$, the first partial derivative of $pr_\theta(X\geq r) $ is
$$
\frac{{\rm d}pr_\theta(X\geq r)}{{\rm d}\theta_1} = -f(-t-\theta_1)pr_{\theta_{2:n}}\left(\sum_{i=2}^nX_i=r-1\right), 
$$
where $f$ denotes density of the standard normal distribution and $\theta_{i:n}$  denotes a subvector of $\theta$ obtained after removing the first $i-1$ components. 
Analogously, the  partial derivative of $pr_\theta(Y\geq r)$ is
$$
\frac{{\rm d}pr_\theta(Y\geq r)}{{\rm d}\theta_1} = f(t-\theta_1)pr_{\theta_{2:n}}\left(\sum_{i=2}^nY_i=r-1\right).
$$
Consider $\Theta_0^b = \left\{\theta \in \mathbb{R}^n: \theta _i \geq 0, \, i=1,\ldots,r-1, \theta_i=0,\, i=r,\ldots,n\right\} \subset \Theta_0$, a subset of the null parameter space containing all parameter points in which the first $r-1$ components are non-negative and the remaining are zero. For each $\theta\in \Theta_0^b$, random variable $\sum_{i=2}^n Y_i$  stochastically dominates $\sum_{i=2}^n X_i$ and since probability mass functions of sums of independent Bernoulli variables are unimodal \citep{wang1993number}, their respective probability mass functions cross at most once. The point of crossing must be below the mode of the distribution of $\sum_{i=2}^nY_i$, that we denote by $m$. Applying the result of  \cite{samuels1965number} concerning the most likely number of successes, we obtain  $m\leq r-2$.   Then $r-1>m$  and thus $pr_{\theta_{2:n}}\left(\sum_{i=2}^n Y_i=r-1\right) \geq pr_{\theta_{2:n}}\left(\sum_{i=2}^nX_i=r-1\right)$,
with equality holding if and only if $\theta =0$. Since $|t-\theta_1| < |-t-\theta_1|$, the value of the density of the standard normal distribution  at $t-\theta_1$ is higher than the value at $-t-\theta_1$. Therefore, the first partial derivative of $c(\theta)$ with respect to $\theta_1$ is positive on $(0,\infty)$:
$$
\frac{{\rm d}c(\theta)}{{\rm d}\theta_1}=\frac{{\rm d}pr_\theta(X\geq r)}{{\rm d}\theta_1} + \frac{{\rm d}pr_\theta(Y\geq r)}{{\rm d}\theta_1}>0.
$$
The same holds for  partial derivatives with respect to $\theta_i$, $i=2,\ldots,r-1$. As a consequence, $c(\theta)$ is a differentiable function  without  stationary points.  Given the positivity of partial derivatives, its maximum value will be attained at the limit as $\theta_i \rightarrow \infty$, $i=1,\ldots,r-1$:
\begin{equation}
\label{max_r}
\sup \left\{c(\theta), \theta\in\Theta_0^b\right\}= \lim_{\theta_1\rightarrow \infty, \ldots, \theta_{r-1}\rightarrow \infty}c(\theta) = pr\left(\sum_{i=r}^n X_i\geq r\right) +  pr\left(\sum_{i=r}^nY_i \geq 1\right).
\end{equation}
The value of the supremum does not depend on unknown parameters, since the last $n-r+1$ components of $\theta$ are  zero for $\theta \in \Theta_0^b$. 
Since $r> n-r+1$,  the first probability on the right hand-side in \eqref{max_r} is zero and therefore
$$
\sup \left\{c(\theta), \theta\in\Theta_0^b\right\} = 1-pr\left(\sum_{i=r}^n Y_i = 0\right) = 1-\Phi(t)^{n-r+1} = 1-\left(1-\frac{\alpha}{n-r+1}\right)^{n-r+1} \leq \alpha,
$$
 where the second and third equality follow from definitions of $Y_i$ and $t$, respectively.
The rightmost equality holds if and only if $n=r$.

It remains to be proved that for $\theta \in \Theta_0\setminus\Theta_0^b$, the value of Type I error cannot exceed that of the supremum on the boundary, i.e. $c(\theta) \leq \sup \left\{c(\theta), \theta\in\Theta_0^b\right\}$. We proceed by contradiction. Assume that there is a point $\theta^\ast \in \Theta_0\setminus \Theta_0^b$ such that $c(\theta^\ast) > C$, where $ C =  \sup \left\{c(\theta), \theta\in\Theta_0^b\right\}$.
Note that  $c(\theta)$ is still increasing in the first $r-1$ components, and for any  finite $\theta^\ast$, it can be increased  by letting $\theta_i \rightarrow \infty, i=1,\ldots,r-1$. Since $\theta^\ast \notin \Theta_0^b$, at least one of the   components of $\theta^\ast_{r:n}$  is negative.  The two probabilities on the right hand-side of \eqref{max_r} depend on $\theta_{r:n}$. In particular,  the probability of the first event $[\sum_{i=r}^nX_i \geq r]$ remains zero and the probability of the event $[\sum_{i=r}^n Y_i \geq 1]$ decreases, since, by assumption  $\theta_i <0$ for at least one $i$ in $\left\{r,\ldots,n\right\}$. As a consequence $ c(\theta^\ast) < C$. 
\end{proof}

\section{Directional replicabilty when $r$ is small}
In this section we present a counterexample that shows that for smaller values of $r$ a certain correction factor is necessary.

Consider $r$ such that $2 \leq r \leq (n+1)/2$. The main difference  with respect to the setting of the previous section  is  the possibility  of simultaneous presence of   $r-1$ positive and $r-1$ negative effects. Given that the problem is invariant with respect to the permutations of $\theta$, we can without loss of generality consider the null parameter space $\Theta_0$, where:
$$
 \Theta_0 = \left\{\theta\in \mathbb{R}^n:
\begin{array}{ll}
     \theta_i \geq 0, & \,i=1,\ldots,r-1; \\
    \theta_i \leq 0,  & \,i=r,\ldots,2r-2;\\
    \theta_i=0,  & \, i = 2r-1, \ldots,n.
\end{array}\right.
$$

Consider a test, that for a given $\alpha\in (0,1/2)$, rejects $H_{r/n}$ if  $\min\left\{p_{r/n}^-,          p_{r/n}^+\right\}\leq \alpha$ as before.
The two events leading to Type I error in \eqref{type1} are no longer disjoint and thus the probability of Type I error, after performing  simple set operations, can be expressed as
\begin{equation}
\label{smallr_type1}
c(\theta)= pr_\theta(T_{(n-r+1)}\geq t) + pr_{\theta}(T_{(r)}\leq - t,\, T_{(n-r+1)}\leq t).
\end{equation}
Analytical study of $c(\theta)$ is more challenging in the presence of quantities pertaining to   the joint distribution of $T_{(n-r+1)}$ and $T_{(r)}$, but we can evaluate $c$ in two special points of the null parameter space. Let $\theta^+$ represent the setting with $r-1$ strong positive effects:
$$\theta^+ = \left\{\begin{array}{ll}
              \theta_i^+ = \infty &\, i=1,\ldots,r-1,\\
              \theta_i^+=0 & \, i=r,\ldots, n.
         \end{array}\right.
$$
Let $\theta^{*}$ represent the setting with $r-1$ strong positive effects and $r-1$ strong negative effects:
$$ \tilde{\theta} = \left\{ 
\begin{array}{ll}
\tilde{\theta_i} =\infty, & \, i=1,\ldots,r-1, \\
\tilde{\theta_i} = -\infty, &  i=r,\ldots,2r-2,\\
\tilde{\theta_i} = 0, &  i= 2r-1, \ldots, n. 
\end{array}
\right.
$$
Then it can be shown that
\begin{eqnarray*}
&c(\theta^+) = 1-\left\{1-\Phi(t)\right\}^{n-r+1} + \sum_{k=r}^{n-r+1} {{n-r+1}\choose{k}}\left\{1-\Phi(t)\right\}^k\left\{2\Phi(t)-1\right\}^{n-r+1-k},\\
&c(\tilde{\theta}) = 1-\left\{2\Phi(t)-1\right\}^{n-2r+2}. 
\end{eqnarray*}
These probabilities can be evaluated numerically. Figure \ref{smallr} shows $c(\theta^+)$ and $c(\tilde{\theta})$ as a function of $r$ for $n=20$. The nominal level is $\alpha =0.1$. For $r \in \left\{2,\ldots,7\right\}$, the probability of Type I error at $\tilde{\theta}$ exceeds that of $\theta^+$, and more importantly exceeds $\alpha$. As a consequence, Type I error defined as  a supremum of $c(\theta)$ over $\Theta_0$ will exceed $\alpha$. Interestingly, the situation is reversed for $r \in \left\{8,9,10\right\}$,  where $c(\tilde{\theta})  < c(\theta^+)<\alpha$. Note that  the plot features values of $r$ up to 10, since for $r>10$, we have $r>(n+1)/2$, the  discordant setting is no longer possible  and we return to the setting of Theorem \ref{theorem_one}. 

\begin{figure}
\centering
\begin{tikzpicture}
\begin{axis}[width=10cm,height=7cm,
    xlabel={$r$},
    ylabel={$c(\theta)$},
    ytick = {0.05, 0.1, 0.15, 0.2},
    yticklabels = {0.05, 0.1, 0.15, 0.2},
    ytick scale label code/.code={},
    mark ="*",
    xmin =2, 
    xmax = 10,
    xtick pos = left,
    ytick pos = left,
    legend pos =south west,
  legend style = {draw = white}
    ]
    \addplot[black, mark = x, mark size = 3pt,  thick]
    coordinates { (2, 0.09948031) (3,0.09553569) (4,0.09543220) (5,0.09544652) (6,0.09546549) (7,0.09548723)
  (8,0.09551232) (9, 0.09554163) (10,0.09557629)};
    \addplot[black, mark = *, mark size = 3pt, thick]
coordinates{(2,0.17343603) (3,0.16370412) (4,0.15268471) (5,0.14010534) (6,0.12561125) (7,0.10873181) (8,0.08882939)
    (9,0.06501844) (10,0.03603306)};
    
    \addplot[mark=none, black, dashed] coordinates {(2,0.2) (3,0.2) (4,0.2) (5, 0.2) (6,0.2) (7,0.2) (8,0.2) (9, 0.2) (10,0.2)};
    \addplot[mark=none, black, dashed] coordinates {(2,0.1) (3,0.1) (4,0.1) (5, 0.1) (6,0.1) (7,0.1) (8,0.1) (9, 0.1) (10,0.1)};
  \addlegendentry{Concordant $\theta^+$};
  \addlegendentry{Discordant $\tilde{\theta}$};
   \end{axis}
\end{tikzpicture}
\caption{\label{smallr}Type I error probability as a function of $r$ for two specific parameter configurations: "concordant"  representing the presence of $r-1$ positive effects and "discordant" representing the presence of $r-1$  strong positive effects and $r-1$ strong negative effects. Dashed lines representing the nominal level of the test $\alpha =0.1$, as well as  the level of the corrected test $2\alpha$, are added for reference. The total number of studies is $n=20$.}
\end{figure}
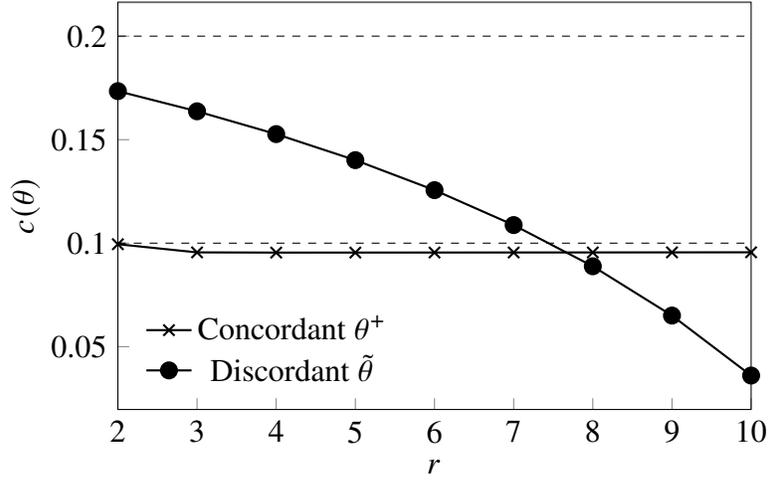

\begin{remark}  Theorem \ref{theorem_one} states then when $r>(n+1)/2$, the factor of 2 can be omitted, while the previous example illustrates that for some values of $r$   below this threshold, i.e. $r\in \left\{2,\ldots,7\right\}$, a correction factor is necessary.   For $r\in \left\{8,9,10\right\}$,  the Type I error at two considered points is below $\alpha$, so that it is possible that the Type I error is controlled for all  points of the null  parameter space. In other words, Theorem \ref{theorem_one} gives a sufficient but not necessary condition for  validity of the $p$-value  $p_{r/n}=\min\left\{p_{r/n}^-,          p_{r/n}^+\right\}$. For instance, it can be shown that $p_{r/n}$ remains a valid $p$-value for $n=3$ and $r=2$, see Proposition 1 in Supplementary material.  
\end{remark}

\section{Data adaptive choice of $r$}
We have so far assumed that $r$ is chosen prior to analysis. In many circumstances there are no substantive considerations indicating which value of $r$ should be preferred. In those situations, it may be desirable to determine $r$ in a 
data adaptive manner. This can be achieved by sequential testing of a collection of nested null hypotheses $H_{k/n}, H_{(k+1)/n}, \ldots H_{n/n}$, where   $k= \lceil (n+2)/2\rceil$ \citep{maurer1995multiple}. The procedure starts from  $H_{k/n}$ and tests each hypothesis at level $\alpha$. If the hypothesis is rejected, the testing proceeds to the next,  if not the procedure terminates.  Let $l$ be the index of the last rejected hypothesis, set to zero if $H_{k/n}$ is not rejected. Then $l$ is a $(1-\alpha)$ lower confidence bound for the maximum number of effects of the same sign, that is  $l \leq \max\left\{n^+, n^-\right\}$ with probability $1-\alpha$, or equivalently, there are at least $l$  effects in the same direction with probability $1-\alpha$.

\section{Discussion}

Bonferroni method is a simple way to combine $p$-values under arbitrary  dependence; here however we assume data coming from different studies to be independent. In that case, the power of the Bonferroni method can be improved by methods that exploit independence. In particular, it can be easily shown that for $r>(n+1)/2$, the result of Theorem  1 remains valid when the Bonferroni correction is substituted by the \v Sid\'ak correction that assumes independence \citep{vsidak1967rectangular}. 
Other combining functions that assume independence include the Simes method
$$
p_{r/n}^+= \min_{i=1,\ldots,n-r+1}\left\{\frac{n-r+1}{i}p_{(r-1+i)}\right\},
$$
and the Fisher combining function
$$
p_{r/n}^+ = pr\left\{\chi^2_{2(n-r+1)}\geq -2\sum_{i=r}^{n}\log p_{(i)}\right\}.
$$
An interesting question that awaits future research is whether the result presented in this work can be extended to these combining functions.

In this work, we have focused our attention on a single directional replicability hypothesis. In many applications, many features are studied simultaneously, and a collection of  replicability hypotheses is considered instead. In those situations, the power for identifying replicating signals can be increased by a careful consideration of the multiple testing aspect. Two existing approaches include filtering based on conditioning  \citep{wang2022detecting,dickhaus2021procedure}. To obtain directional replicability claims, one can follow \cite{owen2009karl} and apply proposed procedures  over the set of features twice, once for each direction at the significance level $\alpha/2$ and declare as replicated  features belonging to the union of the two rejection sets. An open question that awaits future research is whether the correction factor of 2 can be removed in this case.

The special case  of $n$ out $n$ replicability corresponds to testing whether all components of $\theta$ are of the same sign. This is also a special case of a problem  studied by \cite{sasabuchi1980test} in the context of testing hypotheses pertaining to multivariate normal means. The Author has shown that the test of Theorem \ref{theorem_one}, when $r=n$, is a likelihood ratio test for a slightly different null hypothesis, while \cite{berger1989uniformly} showed that the result of \cite{sasabuchi1980test}  remains valid for the null hypothesis $H_{n/n}$.   Our work can be seen as an extension of these results to $r<n$.

In general, rejection of 
$H_{r/n}$ allows one to conclude that   there are at least 
$r$ effects sharing a common sign, without indicating whether that sign is positive or negative. Inferring the sign from data after rejecting the null hypothesis can lead to Type III error,  an issue of concern in directional inference, see \cite{heller2024simultaneous}. It is easily checked that the procedure presented in Theorem \ref{theorem_one} controls Type III error  (Proposition 2 in Supplementary material)  and thus allows one to infer the sign of the replicated effects post-hoc: the sign is positive if $p_{r/n}^+ < p_{r/n}^-$, and  the sign  is negative otherwise.

\section{Acknowledgments}
This work was supported in part by United States National Institute on Aging grants R01AG080598 and  R03AG102038 and by the European Union, the National Recovery and Resilience Plan project “Age-It - Ageing well in an ageing society” (PE0000015 - CUP H73C22000900006). 

\bibliographystyle{apalike}
\bibliography{biblio}

\appendix \section{Supplementary material}

\begin{proposition}
Consider the setting of Theorem 1 and let $n=3$ and $r=2$.    Then $p_{r/n}$ is a valid $p$-value for $H_{r/n}$. 
\end{proposition}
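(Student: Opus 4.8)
The plan is to specialize the machinery of Theorem~\ref{theorem_one} and then push it one step further, noting that here $r=n-r+1=2$ sits exactly on the boundary $r=(n+1)/2$ that Theorem~\ref{theorem_one} excludes. I set $t=\Phi^{-1}(1-\alpha/2)$ and, as in the proof of Theorem~\ref{theorem_one}, write $X=\sum_{i=1}^3[T_i\le -t]$ and $Y=\sum_{i=1}^3[T_i\ge t]$, so that the test rejects iff $Y\ge2$ or $X\ge2$. Since $t>0$ for $\alpha\in(0,1/2)$ and only three studies are involved, the events $\{X\ge2\}$ and $\{Y\ge2\}$ remain disjoint, whence $c(\theta)=pr_\theta(Y\ge2)+pr_\theta(X\ge2)$. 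The new difficulty, relative to Theorem~\ref{theorem_one}, is that the term $pr(\sum_{i=r}^nX_i\ge r)$ that vanished there is now $pr(X_2+X_3\ge2)$, which is strictly positive; the discordant configuration is genuinely in play, so the supremum must be located by a direct two-dimensional argument.

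By permutation invariance of $c$ I would reduce the null set to its representative $\theta_1\ge0$, $\theta_2\le0$, $\theta_3=0$. Writing $a_i=\Phi(\theta_i-t)=pr(T_i\ge t)$, $b_i=\Phi(-t-\theta_i)=pr(T_i\le -t)$ and $s=\Phi(-t)=\alpha/2$, and using $a_3=b_3=s$, the target reduces to the explicit function
\begin{equation*}
c(\theta_1,\theta_2)=(1-2s)(a_1a_2+b_1b_2)+s(a_1+a_2+b_1+b_2),
\end{equation*}
and the goal becomes showing $c\le 2s=\alpha$ on the quadrant $\theta_1\ge0\ge\theta_2$.

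The core step is a coordinatewise quasi-convexity argument. Differentiating and using $a_1'=f(\theta_1-t)$ and $b_1'=-f(\theta_1+t)$ gives
\begin{equation*}
\frac{\partial c}{\partial\theta_1}=f(\theta_1+t)\Bigl\{e^{2\theta_1 t}\bigl[(1-2s)a_2+s\bigr]-\bigl[(1-2s)b_2+s\bigr]\Bigr\},
\end{equation*}
where I have used $f(\theta_1-t)/f(\theta_1+t)=e^{2\theta_1 t}$. Since $a_2,b_2$ do not depend on $\theta_1$, the brace is strictly increasing in $\theta_1$ and, at $\theta_1=0$, equals $(1-2s)(a_2-b_2)\le0$ because $\theta_2\le0$ forces $a_2\le b_2$. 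Hence $c(\cdot,\theta_2)$ decreases and then increases on $[0,\infty)$, so its maximum is attained at $\theta_1=0$ or in the limit $\theta_1\to\infty$; the symmetric computation shows $c(\theta_1,\cdot)$ is maximized at $\theta_2=0$ or as $\theta_2\to-\infty$. Therefore the supremum over the quadrant is attained at one of the four corners $\{0,\infty\}\times\{0,-\infty\}$.

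It then remains to evaluate $c$ at the corners. At $\theta=0$ one gets $c=6s^2-4s^3$, which lies below $2s$ since $2s-6s^2+4s^3=2s(1-s)(1-2s)>0$; at the three ``corners at infinity'' — the concordant points $(\infty,0,0)$ and $(0,-\infty,0)$ and the discordant point $(\infty,-\infty,0)$ — direct substitution gives $c=2s=\alpha$ in each case. Consequently $\sup_{\theta\in\Theta_0}c(\theta)=\alpha$, the bound $c\le\alpha$ holds throughout the null set, and $p_{r/n}=\min\{p_{r/n}^-,p_{r/n}^+\}$ is a valid $p$-value. I expect the quasi-convexity step to be the main obstacle: everything hinges on recognizing that the ratio of the two Gaussian densities is the monotone factor $e^{2\theta_1 t}$ while the remaining conditional probabilities stay constant in the coordinate being varied, which is exactly what forces a single sign change of the derivative and confines the maximum to the four corners.
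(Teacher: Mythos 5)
Your proof is correct, and it takes a genuinely different route from the paper's. The paper also reduces to the quadrant $\theta_1\ge 0\ge\theta_2$, $\theta_3=0$, but its key step is to rule out interior stationary points: it writes the two partial derivatives via the function $h(u)=\Phi(t)+\Phi(t+u)-2\Phi(t)\Phi(t+u)$, reduces the stationarity system to a fixed-point equation $\theta_1^\ast=g\circ g(\theta_1^\ast)$, and verifies \emph{graphically} (Figure \ref{fig1}) that no fixed point exists, before evaluating the same limits $(\infty,0)$, $(0,-\infty)$, $(\infty,-\infty)$, each equal to $\alpha$. You instead write $c$ in closed form in $a_i=\Phi(\theta_i-t)$, $b_i=\Phi(-t-\theta_i)$, factor each partial derivative as a positive density times a brace that is strictly increasing in the coordinate being varied and nonpositive at the origin of that coordinate's range, and conclude coordinatewise quasi-convexity; this confines the supremum to the four corners $\{0,\infty\}\times\{0,-\infty\}$, which you then evaluate (including the corner $\theta=0$, where $c=6s^2-4s^3<2s$, a case the paper does not display explicitly). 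What your approach buys is a fully analytic argument: it does not need the numerical/graphical verification that $g\circ g$ has no fixed point, and it is in fact robust to the possible existence of stationary points, since your single-sign-change argument shows any such point would be a coordinatewise minimum rather than a maximum. The one presentational point worth making explicit is that the quasi-convexity in $\theta_2$ must also be checked for the limiting section $\theta_1=\infty$ (i.e., with $a_1=1$, $b_1=0$), but the same brace factorization applies verbatim there, so the gap is cosmetic. Both proofs restrict to $\alpha\in(0,1/2)$, consistent with the paper's convention in Theorem \ref{theorem_one}.
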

\begin{proof}
When $n=3$ and $r=2$, consider $(\theta_1, \theta_2)^\top \in (0,\infty)\times (-\infty,0)$ and $\theta_3=0$. The probabilities  $ pr(\sum_{i=2}^nX_i=r-1)$ and $pr(\sum_{i=2}^nY_i=r-1)$, featured in the expression for partial derivative of $c(\theta)$, can be evaluated in a straightforward fashion:
 \begin{align*}
 pr(X_2+X_3 =1) &= \Phi(t)+\Phi(t+\theta_2) -2\Phi(t)\Phi(t+\theta_2),\\ pr(Y_2+Y_3=1) &= \Phi(t)+\Phi(t-\theta_2)-2\Phi(t)\Phi(t-\theta_2). 
\end{align*}
Define a function $h(u) = \Phi(t)+\Phi(t+u)-2\Phi(t)\Phi(t+u)$. Then following the proof of Theorem 1
\begin{align}
\label{first}
  \frac{{\rm d}c(\theta)}{{\rm d}\theta_1} &= -f(t+\theta_1)h(\theta_2)+f(t-\theta_1)h(-\theta_2).
  \end{align}
  Analogously, the  partial derivative with respect to $\theta_2$ is
  \begin{align}\label{second}
  \frac{{\rm d}c(\theta)}{{\rm d}\theta_2} &= -f(t+\theta_2)h(\theta_1)+f(t-\theta_2)h(-\theta_1).
  \end{align}
Stationary points can be identified by setting both partial derivatives to zero and finding a pair of values $(\theta_1^\ast, \theta_2^\ast)^\top \in (0,\infty)\times (-\infty,0)$ so  that it satisfies this system of equations.  From \eqref{first}:
$$
\theta_1^\ast = -\frac{1}{2t}\log\left[\frac{h(-\theta_2^\ast)}{h(\theta_2^\ast)}\right].
$$
From \eqref{second}
$$
\theta_2^\ast = -\frac{1}{2t}\log\left[\frac{h(-\theta_1^\ast)}{h(\theta_1^\ast)}\right].
$$
Let $g(u) = -(1/(2t))\log\left[h(-u)/h(u)\right]$. Then a stationary point should satisfy  $\theta_1^\ast = g\circ g(\theta_1^\ast)$, but it can be easily checked that no such $\theta_1 \in (0,\infty)$ exists, see Figure \ref{fig1}. 
As a consequence, the function $c(\theta)$ has no stationary points in $\Theta_0$ and to find its supremum we  investigate its limiting behavior. We have
\begin{align*}
\lim_{\theta_1 \rightarrow \infty, \theta_2 \rightarrow -\infty} c(\theta)&= pr(X_3 =1) +pr(Y_3=1)   = \alpha,\\
\lim_{\theta_1\rightarrow \infty, \theta_2=0}c(\theta)&= pr(X_2+X_3=2) +pr(Y_2+Y_3 \ge 1)\\
&=\Phi(-t)^2 + 1-\Phi(t)^2 \\
&= 2-2\Phi(t) \\
&=\alpha,
\end{align*}
where the last equality follows from the definition of $t$, see the proof of Theorem 1. 
Analogously, $\lim_{\theta_1=0, \theta_2\rightarrow-\infty}c(\theta)=\alpha$, and thus $\sup \left\{c(\theta), \theta\in \Theta_0\right\}=\alpha$, which completes the proof.
\end{proof}

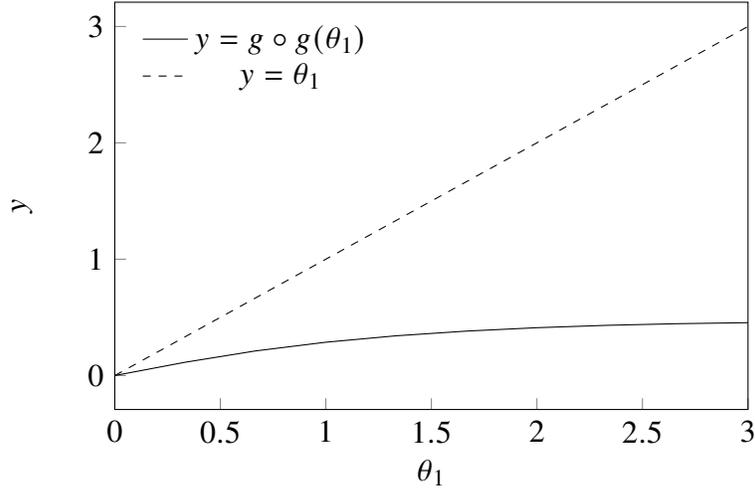
\begin{figure}
\centering
\begin{tikzpicture}
\begin{axis}[width=10cm,height=7cm,
    xlabel={$\theta_1$},
    ylabel={$y$},
    ytick scale label code/.code={},
    mark ="*",
    xmin =0, 
    xmax = 3,
    xtick pos = left,
    ytick pos = left,
    legend pos= north west,
    legend style = {draw = white}
    ]
    \addplot[black]
    coordinates { (0, 0) (0.3333333,0.1142378) (0.6666667, 0.2115821) (1.0000000, 0.2863999) (1.3333333, 0.3416454) (1.6666667, 0.3820159)
  (2.0000000, 0.4111723) (2.3333333, 0.4316986) (2.6666667,0.4455676) (3.0000000, 0.4544309)};
  \addplot[black,dashed] {x};
  \addlegendentry{$y=g\circ g(\theta_1)$};
  \addlegendentry{$y=\theta_1$}
   \end{axis}
   \end{tikzpicture}
   \caption{{\label{fig1}} Plot of the function $g\circ g(\theta_1)$. Identity function $y=\theta_1$ is added for reference.  }
\end{figure}

\begin{proposition} Consider the procedure presented in Theorem 1. Then Type III error is controlled at level $\alpha$ in the sense that 
$$
\sup_{\theta \in \Theta_0: n^+(\theta)\geq r}pr(T_{(r)}< -t)\leq \alpha \quad \mbox{and}\quad  \sup_{\theta \in \Theta_0: n^-(\theta)\geq r}pr(T_{(n-r+1)}> t)\leq \alpha,
$$
where $n^+(\theta)$ and $n^-(\theta)$ indicate the number of positive and negative components of $\theta$, respectively. 
\begin{proof}
    Due to the symmetry of the problem, it is sufficient to prove the first inequality. We have
    \begin{align*}
    \sup_{\theta \in \Theta_0: n^+(\theta)\geq r}pr(T_{(r)}< -t)& \leq \sup_{\theta \in \Theta_0: n^+(\theta)\geq r}\left\{pr(T_{(r)}< -t) + pr(T_{(n-r+1)}>t)\right\}\\
    & \leq \sup_{\theta \in \Theta_0}\left\{pr(T_{(r)}< -t) + pr(T_{(n-r+1)}>t)\right\}\\
    & =pr(\mbox{Type I error is made})\\
    &\leq \alpha,
    \end{align*}
where the last inequality follows from Theorem 1. 
\end{proof}
\end{proposition}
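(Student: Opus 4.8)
The plan is to control the one-sided, wrong-direction rejection probability $pr_\theta(T_{(r)}<-t)$ over the region where positive replicability genuinely holds, namely $\{\theta:n^+(\theta)\geq r\}$, by comparing it to the two-sided Type I error $c(\theta)$ already bounded in the proof of Theorem~\ref{theorem_one}. By the symmetry $\theta\mapsto-\theta$, which interchanges the two directions, it suffices to treat the first inequality. The obstacle is structural: the region $\{\theta:n^+(\theta)\geq r\}$ lies in the alternative of $H_{r/n}$ and is therefore disjoint from the null region $\Theta_0$, so one cannot merely enlarge the domain of the supremum to $\Theta_0$ and quote the earlier bound. The remedy will be a monotone transport of each such $\theta$ into $\Theta_0$ that does not decrease the error.

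First I would establish monotonicity of the target probability. Setting $N_-=|\{i:T_i<-t\}|$, the event $\{T_{(r)}<-t\}$ coincides with $\{N_-\geq r\}$, and $N_-$ is a sum of independent indicators $[T_i<-t]$ with success probabilities $\Phi(-t-\theta_i)$ that decrease in $\theta_i$. A coupling on a common standard normal then shows that lowering any coordinate $\theta_i$ stochastically increases $N_-$, so that $pr_\theta(T_{(r)}<-t)$ is non-increasing in each $\theta_i$.

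Next I would transport $\theta$ into $\Theta_0$. Given $\theta$ with $n^+(\theta)\geq r$, let $\theta''$ be obtained by replacing every strictly positive coordinate of $\theta$ with $0$ and leaving the others unchanged. Then $n^+(\theta'')=0<r$ and $n^-(\theta'')=n^-(\theta)\leq n-r<r$, where the final strict inequality uses $r>(n+1)/2>n/2$; hence $\theta''\in\Theta_0$. Monotonicity gives $pr_\theta(T_{(r)}<-t)\leq pr_{\theta''}(T_{(r)}<-t)$, and bounding the single event by the union appearing in \eqref{type1} yields, via Theorem~\ref{theorem_one},
\[
pr_{\theta''}(T_{(r)}<-t)\leq pr_{\theta''}\bigl(T_{(r)}\leq -t\cup T_{(n-r+1)}\geq t\bigr)=c(\theta'')\leq\sup_{\theta\in\Theta_0}c(\theta)\leq\alpha.
\]
Taking the supremum over $\{\theta:n^+(\theta)\geq r\}$ gives the first inequality, and the symmetric argument gives the second.

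I expect the monotone reduction to be the crux. The tempting one-line argument, enlarging the supremum from the Type III region to $\Theta_0$ and invoking the Type I bound, is not itself valid because the two regions are disjoint. The honest step is to observe that pushing each positive effect down to the boundary value $0$ simultaneously places the parameter in $\Theta_0$ and, by monotonicity, can only raise the probability of a wrong-direction rejection; after that, the proof reduces to the trivial domination of an event by a union together with the Type I control of Theorem~\ref{theorem_one}.
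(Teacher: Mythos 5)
Your proof is correct, and it takes a genuinely different --- and more careful --- route than the paper's. The paper's own argument is a two-step domination: it bounds $pr(T_{(r)}<-t)$ by the union probability $pr(T_{(r)}<-t)+pr(T_{(n-r+1)}>t)$ and then enlarges the supremum to one over $\Theta_0$, invoking Theorem~\ref{theorem_one}. As you observe, that enlargement is legitimate only if the Type III region were contained in $\Theta_0$: read literally (with $\Theta_0$ the null space of $H_{r/n}$, where $n^+<r$) the constrained set $\{\theta\in\Theta_0:n^+(\theta)\geq r\}$ is empty and the claim vacuous, while under the intended reading (the alternative region $n^+(\theta)\geq r$) the two regions are disjoint and the domination of suprema does not follow from set inclusion. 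Your monotone-transport step supplies exactly the missing link: writing $\{T_{(r)}<-t\}=\{N_-\geq r\}$ with $N_-=\sum_i [T_i<-t]$ a sum of independent indicators whose success probabilities $\Phi(-t-\theta_i)$ decrease in $\theta_i$, a coupling on common normals shows $pr_\theta(T_{(r)}<-t)$ is non-increasing in each coordinate; replacing every positive coordinate by $0$ therefore cannot decrease this probability, and the resulting point $\theta''$ satisfies $n^+(\theta'')=0$ and $n^-(\theta'')\leq n-r<r$ (using $r>(n+1)/2$), hence $\theta''\in\Theta_0$, after which the union bound and Theorem~\ref{theorem_one} finish the argument. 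What your approach buys is a proof that actually covers the parameter region on which Type III error is defined; what the paper's version buys is brevity, at the cost of a supremum-comparison step that does not hold as written.
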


\end{document}